\newtheorem{theorem}{Theorem}
\newtheorem{corollary}{Corollary}
\newtheorem{fact}{Fact}
\theoremstyle{definition}
\newtheorem{definition}{Definition}
\newcommand{\eqnref}[1]{\hyperref[#1]{{(\ref*{#1})}}}
\newcommand{\thmref}[1]{\hyperref[#1]{{Theorem~\ref*{#1}}}}
\newcommand{\lemref}[1]{\hyperref[#1]{{Lemma~\ref*{#1}}}}
\newcommand{\corref}[1]{\hyperref[#1]{{Corollary~\ref*{#1}}}}
\newcommand{\defref}[1]{\hyperref[#1]{{Definition~\ref*{#1}}}}
\newcommand{\secref}[1]{\hyperref[#1]{{Section~\ref*{#1}}}}
\newcommand{\figref}[1]{\hyperref[#1]{{Figure~\ref*{#1}}}}
\newcommand{\tabref}[1]{\hyperref[#1]{{Table~\ref*{#1}}}}
\newcommand{\remref}[1]{\hyperref[#1]{{Remark~\ref*{#1}}}}
\newcommand{\appref}[1]{\hyperref[#1]{{Appendix~\ref*{#1}}}}
\newcommand{\claimref}[1]{\hyperref[#1]{{Claim~\ref*{#1}}}}
\newcommand{\propref}[1]{\hyperref[#1]{{Proposition~\ref*{#1}}}}
\newcommand{\exampleref}[1]{\hyperref[#1]{{Example~\ref*{#1}}}}
\newcommand{\conjref}[1]{\hyperref[#1]{{Conjecture~\ref*{#1}}}}
\newcommand{\factref}[1]{\hyperref[#1]{{Fact~\ref*{#1}}}}
\newcommand{\bra}[1]{{\left\langle{#1}\right\vert}}
\newcommand{\ket}[1]{{\left\vert{#1}\right\rangle}}
\newcommand{\OR}{\textsc{or}\xspace}
\newcommand{\AND}{\textsc{and}\xspace}
\newcommand{\NOT}{\textsc{not}\xspace}
\newcommand{\PARITY}{\textsc{parity}\xspace}
\newcommand{\CNOT}{\textsc{cnot}\xspace}
\newcommand{\Tr}{\operatorname{Tr}}
\renewcommand{\>}{\rangle}
\newcommand{\<}{\langle}
\newcommand{\cl}[1]{\ensuremath{\mathrm{#1}}\xspace}
\begin{document}


\title{Dequantizing read-once quantum formulas}

\author{
\normalsize Alessandro Cosentino, Robin Kothari, and Adam Paetznick \\[.5ex]
\small David R.\ Cheriton School of Computer Science \\
\small and Institute for Quantum Computing \\
\small University of Waterloo
}

\date{}
\maketitle


\maketitle

\begin{abstract}
Quantum formulas, defined by Yao [FOCS '93], are the quantum analogs of classical formulas, i.e., classical circuits in which all gates have fanout one.
We show that any read-once quantum formula over a gate set that contains all single-qubit gates is equivalent to a read-once classical formula of the same size and depth over an analogous classical gate set.
For example, any read-once quantum formula over Toffoli and single-qubit gates is equivalent to a read-once classical formula over Toffoli and \NOT gates. 
We then show that the equivalence does not hold if the read-once restriction is removed. To show the power of quantum formulas without the read-once restriction, we define a new model of computation called the one-qubit model and show that it can compute all boolean functions.  This model may also be of independent interest.
\end{abstract}


\section{Introduction}
\label{sec:intro}

It is widely believed that quantum computers can outperform classical computers for certain problems. Two prominent examples of such problems are factoring, solved by Shor's algorithm~\cite{Sho97}, and simulation of quantum systems~\cite{Fey82,Llo96,AT03}.
Many restricted versions of quantum computers also outperform classical models. Studying the power of restricted quantum models can help identify the ``quantum'' features that are required for computational speedups. 

Some restricted models are also practically motivated, and could be available before we are able to build unrestricted quantum computers.
The ``one clean qubit'' model~\cite{KL98}, for example, can solve some problems that are not known to have efficient classical algorithms~\cite{SJ08}. 
Similarly, log-depth quantum circuits can implement Shor's algorithm with the aid of a classical computer~\cite{CW00}.

On the other hand, if enough restrictions are placed on a quantum model, it may be efficiently simulable by a classical model. In analogy with derandomization, we call this \emph{dequantization}. For example, the simulation of a polynomial-time quantum computer by an exponential-time classical computer is an example of dequantization, albeit a very weak one. On the other hand, if the quantum model is \emph{equivalent} to a classical model, we call this \emph{strong} dequantization. For example, if it were shown that a polynomial-time quantum computer can be simulated by a polynomial-time classical computer, this would be an example of strong dequantization. In this paper, we strongly dequantize read-once quantum formulas by showing that they are equivalent to classical read-once formulas. 

Several past results can be viewed as dequantizing quantum models of computation. For example, Valiant introduced and dequantized a restricted model of quantum computation~\cite{Val02} that was later shown to be equivalent to a classical model \cite{VdNes11}. Additionally, quantum circuits containing only Clifford gates are equivalent to classical circuits of \CNOT and \NOT gates \cite{Got98, BCL+06}. Other examples of dequantization appear in~\cite{FGHZ05, TD04}.

Classical formulas are a well-studied restriction of circuits in which gates have a single output wire and each gate is a function from $k\geq 1$ bits to one bit. Compared to general circuits, the power of formulas is much better understood and several lower bound techniques are known for explicit functions~\cite{Weg87,Juk12}. The study of formulas has lead to a better understanding of the difficulty of proving lower bounds for general circuits. Similarly, studying quantum formulas may lead to a better understanding of the power of quantum circuits.

Indeed, little is known about quantum formulas.  They were defined and examined by Yao~\cite{Yao93} in 1993. But, other than additional results by Roychowdhury and Vatan~\cite{RV02}, they are largely unstudied.

In this paper, we ask whether it is possible to dequantize quantum formulas. Informally, the question is whether, for a quantum gate set $G$, there exists a classical gate set $\hat G$ of roughly equivalent power, such that any quantum formula over $G$ can be written as a classical formula over $\hat G$. We discuss this question in \secref{sec:read-once-formulas} and define an appropriate classical gate set $\hat G$ (\defref{def:classical-formula-gate-set}) corresponding to any quantum gate set $G$. Our main results (\thmref{thm:exact-qrof} and \thmref{thm:bdd-error-qrof}) essentially resolve this question for read-once quantum formulas by showing that read-once quantum formulas over a gate set that includes all single-qubit gates can always be dequantized.

One utility of our classical gate set $\hat G$ is that, in some cases of interest, it corresponds to the gate set that one would naturally expect.
For the set of all $k$-qubit channels (for some constant $k$), which is the gate set used in previous papers on quantum formulas~\cite{Yao93,RV02}, $\hat G$ is the set of all $k$-bit gates. 
The set of arbitrary fanin Toffoli gates and all single-qubit gates is a commonly used gate set in quantum circuit complexity (see, e.g.,~\cite{GHMP02}). We show in \thmref{thm:toffoli-qformula-equals-formula} that, for this gate set, $\hat G$ is the set of classical arbitrary fanin Toffoli gates and the \NOT gate.

It is natural to ask whether the read-once constraint is required for dequantization to hold. In \secref{sec:one-qubit-model} we show that \thmref{thm:exact-qrof} and \thmref{thm:bdd-error-qrof} are false if the read-once constraint is dropped. In particular, we show that there exist quantum formulas over a gate set $G$ that cannot be simulated by classical formulas over $\hat G$. 
To show this we define a model of computation that we call the one-qubit model. Our model is similar to the ``one clean qubit'' model of Knill and Laflamme~\cite{KL98}, but we do not have any mixed states in addition to the one clean qubit. We show that  this model can compute any boolean function (\thmref{thm:oqq-contains-nc1}). However, the one-qubit model is contained in quantum formulas over a gate set $G$ for which $\hat G$ contains only the \NOT gate and the \PARITY gate. Even classical \emph{circuits} of arbitrary size over \NOT and \PARITY cannot compute, for example, the \AND function on two bits.  Thus our dequantization theorems are false without the read-once constraint.

\section{Preliminaries}
\label{sec:prelims}

We start with an introduction to classical formulas and then extend the definition to quantum formulas. We refer the reader to textbooks on circuit complexity~\cite{Weg87,Juk12} or quantum computing~\cite{NC00} for further information.

A gate set $F$ is a set of functions from $k \geq 1$ bits to one bit. A classical formula over a gate set $F$ is a circuit composed of gates from $F$ in which the output of each gate is connected to the input of at most one other gate---i.e., gates have fanout one. Note, however, that input bits may have arbitrary fanout, i.e., more than one gate can use the same bit $x_i$ as an input. The formula outputs a single bit and is said to exactly compute a boolean function $f$ if the output of the formula is $f(x)$ on input $x$.  The restriction that each gate has fanout one makes the circuit look like a tree in which the output gate is the root, non-output gates are internal nodes, and leaves are labeled by input bits. A \emph{read-once} formula is one in which each input bit appears at most once.

Yao defined a quantum formula as a single-output quantum circuit composed of unitary gates in which the path connecting any input to the output is unique. Equivalently, a quantum circuit is a quantum formula if every gate has at most one output that is used as an input to a subsequent gate. The other outputs of a gate are never used again and can be discarded (traced out). We can regard the unitary and discard step as one operation and call that a quantum gate. This makes the analogy with classical formulas clearer. In this paper we use the phrases ``quantum channel'' and ``quantum gate'' interchangeably; a quantum gate need not be unitary. 
We will sometimes talk about a formula over a set of unitaries, which may have multiple output qubits.  In this case, the formula may use any single-output channel obtained by applying one of the unitaries in the set and then tracing out all but one of the output qubits.

We define a quantum gate set to be a set $G$ of quantum channels from $k \geq 1$ qubits to one qubit. A quantum channel is a completely-positive trace-preserving map. In the case that $k=1$ we call the channel a \emph{single-qubit} channel, otherwise we call the channel a \emph{$k$-qubit} channel. Note that a quantum formula may read input bits multiple times, just like a classical formula. A read-once quantum formula is one in which each input bit appears at most once. \figref{fig:circuit-vs-formula} shows an example of a classical formula, a read-once classical formula and a read-once quantum formula. Similar to the classical case, one gate is designated as the output gate. A quantum formula is said to exactly compute a boolean function $f$ if the output of the formula is $|f(x)\>$ on input $x$. In \secref{sec:dequantization-bounded-error} we discuss how to extend this definition to bounded-error quantum formulas. 

\begin{figure}
\centering
\subfloat[]{\includegraphics[height=2.4cm]{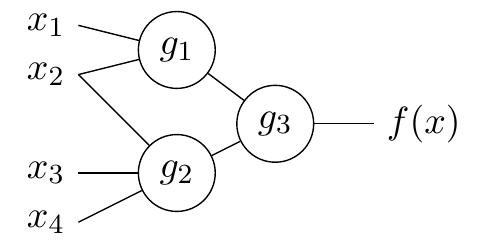}}
\hfill
\subfloat[]{\includegraphics[height=2.4cm]{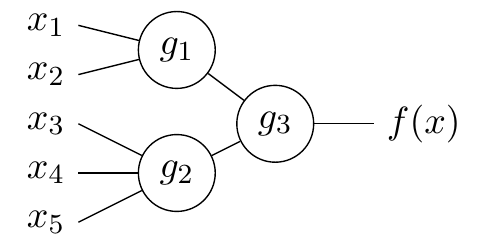}}
\hfill
\subfloat[]{\includegraphics[height=2.4cm]{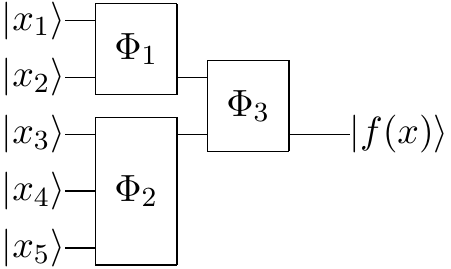}}
\caption{\label{fig:circuit-vs-formula}
(a) A classical (read-many) formula may use input bits multiple times to compute a function $f(x)$.  Here the input $x_2$ is used twice and the formula is over the gate set $\hat G = \{g_1, g_2, g_3\}$. (b) A classical \emph{read-once} formula may use each input bit only once. (c) A quantum read-once formula has the same structure as a classical read-once formula, but may contain quantum channels with single qubit outputs.  Here the quantum formula is over the gate set $G = \{\Phi_1, \Phi_2, \Phi_3\}$.}
\end{figure}

The \emph{size} of a formula is defined as the number of gates, excluding single-bit or single-qubit gates, following standard convention in classical circuit complexity. 
The \emph{depth} of a formula is the maximum number of multi-bit or multi-qubit gates on any path from the output to an input. We say that a formula accepts a language $L \subseteq \{0,1\}^*$ if on input $x$ it outputs $1$ if and only if $x \in L$.

We now state previously known results about quantum formulas. Yao first showed that the majority function needs super-linear sized bounded-error quantum formulas over the set of all three-qubit unitaries~\cite{Yao93}. Later, Roychowdhury and Vatan~\cite{RV02} showed that a classical formula-size lower bound technique due to Nechiporuk~\cite{Weg87,Juk12} also extends to bounded-error quantum formulas over the set of all $k$-qubit unitary matrices, for any constant $k$.

Roychowdhury and Vatan also showed that any function computed by a bounded-error quantum formula over $k$-qubit unitaries can be computed by a classical \emph{circuit} of slightly larger size and depth. Our result applies to a wider range of gate sets and dequantizes to the more limited model of classical read-once formulas.

\section{Dequantization of read-once formulas}
\label{sec:read-once-formulas}

Our main objective is to determine the conditions under which it is possible to dequantize quantum formulas. More precisely, for a quantum gate set $G$ we seek a classical gate set $\hat G$ for which a language $L$ is accepted by a quantum formula over $G$ if and only if it is accepted by a classical formula over $\hat G$.

Note that we want the two classes to be exactly equal in power, thus proving a strong dequantization result. If we only require that all functions computed by quantum formulas over $G$ can be computed by classical formulas, then the result is trivial since we could just allow $\hat G$ to be the set of all boolean functions.

In this section, we show that read-once quantum formulas, for gate sets which include all single-qubit gates, can be strongly dequantized to classical formulas of the same size and depth. 
We then explicitly construct the classical gates sets corresponding to some particular quantum gate sets of interest.
For example, any read-once quantum formula over all $k$-qubit channels is equivalent to a classical read-once formula over all $k$-bit functions.
Similarly, read-once quantum formulas over Toffoli gates and all single-qubit gates can be dequantized to classical read-once formulas over Toffoli and \NOT.

\subsection{Dequantization of exact read-once formulas}
\label{sec:dequantization-exact-case}
We first prove the claim for exact read-once formulas. The proof is simpler in the exact case, and contains all of the essential ideas.  Extension to bounded-error is discussed in~\secref{sec:dequantization-bounded-error}.

Before attempting to prove the theorem, let us discuss the correspondence between the quantum gate set $G$ and its classical counterpart $\hat G$. 
Call a quantum channel \emph{classical} if the output is classical whenever the input is classical. If $G$ contains a classical channel, it is clear that $\hat G$ should contain a gate that performs the same classical operation.

Some gates in $G$ may be non-classical, but may be composed with other gates in a quantum formula to form a classical gate.  Consider, for example, the depth-one quantum formula in~\figref{fig:depth-one-qformula}.  Given a classical input, this formula outputs a classical bit and therefore computes a classical function $f$.  For strong dequantization, we require that $\hat G$ admit a depth-one classical formula that computes $f$.  

\begin{figure}[h]
\centering
\includegraphics[height=2.1cm]{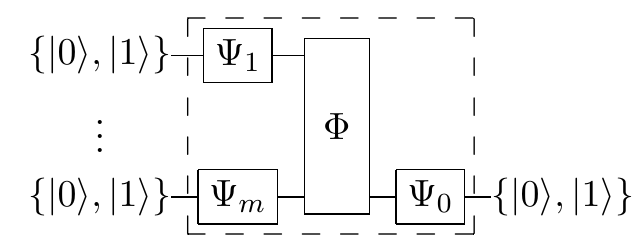}
\caption{\label{fig:depth-one-qformula} A depth-one quantum formula computes a classical function if on classical inputs it outputs a classical bit.}
\end{figure}

We therefore define $\hat G$ to contain only those gates that can be obtained by depth-one quantum formulas over $G$. Informally, this is the smallest gate set that would suffice to prove a strong dequantization result. 

\begin{definition}
\label{def:classical-formula-gate-set}
Let $G$ be a set of quantum channels. Define $\hat G$ to be the set of all classical gates that are computable by a \emph{depth-one} quantum formula over $G$.
\end{definition}

We will also need the following fact about density matrices and quantum channels.

\begin{fact}
\label{fact:output-orthogonal-implies-input-orthogonal}
Let $\Phi$ be a single-qubit channel and $\rho,\sigma$ be $2 \times 2$ density matrices. If $||\Phi(\rho) - \Phi(\sigma)||_1=2$ then $\rho$ and $\sigma$ are pure and orthogonal.
\end{fact}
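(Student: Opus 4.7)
The plan is to combine two standard facts about the trace norm: contractivity under quantum channels, and the characterization of when two density matrices are maximally distinguishable.

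First I would apply the monotonicity of the trace norm under CPTP maps. Since $\Phi$ is a quantum channel, we have $\|\Phi(\rho)-\Phi(\sigma)\|_1 \le \|\rho-\sigma\|_1$. Combined with the hypothesis $\|\Phi(\rho)-\Phi(\sigma)\|_1 = 2$ and the general upper bound $\|\rho-\sigma\|_1 \le \|\rho\|_1 + \|\sigma\|_1 = 2$ valid for any density matrices, this forces $\|\rho-\sigma\|_1 = 2$.

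Next I would invoke the standard fact that two density matrices $\rho,\sigma$ satisfy $\|\rho-\sigma\|_1 = 2$ if and only if they have orthogonal supports. One way to see this is to spectrally decompose the Hermitian, traceless operator $\rho-\sigma = P_+ - P_-$ with $P_\pm \ge 0$ and $P_+ P_- = 0$; then $\text{tr}(P_+) = \text{tr}(P_-)$ and $\|\rho-\sigma\|_1 = \text{tr}(P_+)+\text{tr}(P_-)$, while $\text{tr}(P_+) \le \text{tr}(\rho) = 1$ with equality only if $\rho$ is supported on $\text{range}(P_+)$ (and symmetrically for $\sigma$ on $\text{range}(P_-)$).

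Finally, specializing to the $2\times 2$ case, orthogonal supports inside a two-dimensional Hilbert space forces each support to be a one-dimensional subspace with the two being orthogonal complements. Hence both $\rho$ and $\sigma$ are rank one, i.e.\ pure, and they are supported on orthogonal lines, which is exactly the conclusion that $\rho$ and $\sigma$ are pure and orthogonal. No step here is really a main obstacle; the only care needed is in citing monotonicity of the trace norm under CPTP maps, and in the short argument that $\|\rho-\sigma\|_1 = 2$ characterizes orthogonal supports, which is a textbook result.
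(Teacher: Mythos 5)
Your proposal is correct and follows essentially the same route as the paper's proof: monotonicity of the trace norm under channels forces $\|\rho-\sigma\|_1=2$, which implies orthogonal supports, which in dimension two forces both states to be pure and mutually orthogonal. The only difference is that you sketch a proof of the orthogonal-supports characterization (via the Jordan decomposition of $\rho-\sigma$) where the paper simply cites it.
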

\begin{proof}
We use the fact that the trace distance is monotone non-increasing under the action of quantum channels~\cite[(9.72)]{Wil11} to conclude that $||\rho - \sigma||_1 \geq 2$. Thus $||\rho - \sigma||_1 = 2$, since the trace distance of two qubits cannot be larger than two~\cite[(9.11)]{Wil11}. Furthermore, the trace distance is maximum only when $\rho$ and $\sigma$ have support on orthogonal subspaces~\cite[(9.12)]{Wil11}. Since these are $2 \times 2$ matrices which act on a two-dimensional vector space, their supports can be at most one-dimensional if the supports are orthogonal. Thus they can be written in the form $\rho=|\psi\>\<\psi|$, $\sigma=|\phi\>\<\phi|$. Finally, since they have support on orthogonal subspaces, $\<\psi|\phi\> = 0$, as claimed.
\end{proof}

We are now ready to state the main theorem.

\begin{theorem}
\label{thm:exact-qrof}
Let $G$ be a set of quantum channels that includes all single-qubit channels. Then a language $L$ is accepted by an exact read-once quantum formula of depth $d$ and size $s$ over $G$ if and only if $L$ is accepted by a read-once classical formula of depth $d$ and size $s$ over $\hat G$, where $\hat G$ is given by~\defref{def:classical-formula-gate-set}.
\end{theorem}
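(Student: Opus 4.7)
I plan to prove both directions separately, with the nontrivial direction being quantum-to-classical, handled by induction on depth.

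The easier direction (classical-to-quantum) is immediate from the definition of $\hat G$: each gate $g \in \hat G$ is, by construction, computable by a depth-one quantum formula over $G$, so replacing each classical gate in a classical read-once formula with its depth-one quantum implementation yields a read-once quantum formula over $G$. The single-qubit channels used for input preparation and output post-processing live on wires between adjacent multi-qubit gates and contribute nothing to size or depth, so both are preserved.

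For the forward direction I induct on depth $d$. The base case $d=0$ is trivial: a depth-zero formula processes a single input bit through single-qubit channels and exactly computes one of the four $1$-bit boolean functions, each realized by a depth-zero classical formula. For the inductive step, let $\Phi$ denote the effective top gate of $F$ (the last multi-qubit gate together with any subsequent single-qubit post-processing) and let $F_1,\ldots,F_k$ be the subformulas feeding into it with output states $\rho_i(x^{(i)})$. The key technical claim is that for each $i$ on whose inputs $f$ depends, there exist orthogonal pure states $|\alpha_i\rangle, |\beta_i\rangle$ and a boolean function $f_i$ such that $\rho_i(x^{(i)}) = |\alpha_i\rangle\langle\alpha_i|$ when $f_i(x^{(i)})=0$ and $\rho_i(x^{(i)}) = |\beta_i\rangle\langle\beta_i|$ otherwise. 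To prove it I fix the other subformulas' inputs so that $f$ genuinely depends on $x^{(i)}$; by read-onceness this yields a well-defined single-qubit channel $\Phi'_i$ with $\Phi'_i(\rho_i(x^{(i)})) = |f(x)\rangle\langle f(x)|$. Any pair $x^{(i)}, y^{(i)}$ yielding different final outputs satisfies $\|\Phi'_i(\rho_i(x^{(i)})) - \Phi'_i(\rho_i(y^{(i)}))\|_1 = 2$, so \factref{fact:output-orthogonal-implies-input-orthogonal} forces the corresponding states to be pure and orthogonal; call them $|\alpha_i\rangle, |\beta_i\rangle$. For any other input $z^{(i)}$, $\rho_i(z^{(i)})$ is distinguished by $\Phi'_i$ from at least one of these two (whichever has the opposite final output), hence is itself pure, and since orthogonal complements of pure states in a single qubit are unique, it must equal $|\alpha_i\rangle\langle\alpha_i|$ or $|\beta_i\rangle\langle\beta_i|$.

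Given the claim I post-compose each such $F_i$ with the single-qubit channel sending $|\alpha_i\rangle \mapsto |0\rangle$ and $|\beta_i\rangle \mapsto |1\rangle$ (which lies in $G$ by hypothesis), yielding a read-once quantum formula of depth at most $d-1$ that exactly computes $f_i$; by induction there is an equivalent classical read-once formula over $\hat G$ of the same size and depth. At the root I take the classical gate $g(b_1,\ldots,b_k) = \Phi\bigl(|\gamma_1^{b_1}\rangle\langle\gamma_1^{b_1}| \otimes \cdots \otimes |\gamma_k^{b_k}\rangle\langle\gamma_k^{b_k}|\bigr)$ with $|\gamma_i^0\rangle = |\alpha_i\rangle, |\gamma_i^1\rangle = |\beta_i\rangle$; this is classical-valued by the claim and admits a depth-one quantum formula over $G$ (single-qubit preparations followed by $\Phi$), so $g \in \hat G$. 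Subformulas on whose inputs $f$ does not depend are handled by fixing the corresponding $b_i$ to a constant, which leaves the resulting restricted gate in $\hat G$. The combined classical formula has size at most $s$ and depth at most $d$, and is read-once because the subformula input sets are disjoint. The main obstacle is the key claim itself, which carefully exploits both read-onceness (to isolate each $F_i$ by fixing the other subformulas' inputs) and \factref{fact:output-orthogonal-implies-input-orthogonal} (to extract the pure-orthogonal structure of the subformula output states).
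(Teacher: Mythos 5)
Your proposal is correct and follows essentially the same approach as the paper's proof: induction on the depth, using the read-once property to isolate each subformula feeding the top gate, invoking \factref{fact:output-orthogonal-implies-input-orthogonal} to show that each relevant subformula's output states form exactly two orthogonal pure states (or that the gate is independent of that wire), and then absorbing the basis changes into a depth-one classical gate of $\hat G$. The only cosmetic differences are your choice of base case ($d=0$ rather than the paper's $d=1$) and that you define the root gate directly via single-qubit state preparations instead of inserting $\Psi_k, \Psi_k^\dagger$ pairs into the circuit.
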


\begin{proof}
One direction of the ``if and only if'' is obvious; we only prove the other direction. The proof is by induction on the depth of the quantum formula. 
The claim is clearly true for depth-one since $\hat G$ contains all classical functions  computable by depth-one quantum formulas over $G$. 
\vspace{-7pt}

\noindent
\begin{minipage}[t]{\textwidth}
\begin{wrapfigure}{r}{0.3\textwidth}
\vspace{-18pt}
\centering
\includegraphics[width=.3\textwidth]{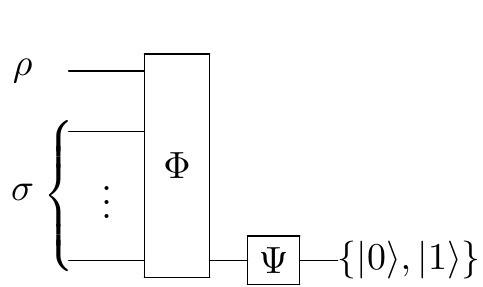}
\caption{\label{fig:final-gate} The final gates $\Phi$ and $\Psi$ of a quantum formula.}
\end{wrapfigure}

\setlength{\parindent}{1.5em}
Consider the final multi-qubit gate $\Phi \in G$ of a depth-$d$ quantum formula that accepts language $L$.
$\Phi$ may be followed by a single-qubit channel $\Psi$, the output of which is a classical state. Let $m$ be the number of inputs to $\Phi$. 

Each input qubit to $\Phi$ is the output of a quantum sub-formula of depth at most $d-1$ on a distinct subset of the original input bits.  We cannot invoke the induction hypothesis directly on the depth-($d-1$) sub-formulas, however, because the outputs may be quantum states.  We will show that each sub-formula can be replaced with different sub-formula of the same size and depth that outputs a classical bit.
\end{minipage}
\vspace{-3pt}

Without loss of generality, let the first input of $\Phi$ depend on the input bits $\{x_1, \ldots, x_l\}$ for some $1 \leq l \leq n$. Let the state of the first input qubit, which is a function of $\{x_1, \ldots, x_l\}$, be called $\rho$. Let $\sigma$ be the state of the remaining input qubits. Thus $\sigma$ is a function of the remaining inputs $\{x_{l+1}, \ldots, x_n\}$.  

Let $P$ be the set of all distinct states $\rho$ obtained by enumerating over all inputs  $\{x_1, \ldots, x_l\}$.  Our goal is to show that either $P$ contains exactly two orthogonal pure states or the output $\Phi$ is independent of the first input qubit. If $P$ contains exactly two orthogonal pure states, then there is a unitary channel $\Psi_1$ such that $\Psi_1(\rho) \in \{\ket 0 \bra 0, \ket 1 \bra 1\}$ for all $\rho \in P$. By composing $\Psi_1$ and $\Phi$ we may construct a new channel that accepts a classical bit as its first input. On the other hand, if the output is independent of the first input, we can consider a channel $\Psi_1$ that always outputs a fixed state independent of its input. 

First, we look for two density matrices in $P$, which we will call $\rho_1$ and $\rho_2$, and a bit string $x_{l+1}\ldots x_n$ such that the induced state $\sigma$ satisfies $\Phi(\rho_1 \otimes \sigma) \neq \Phi(\rho_2 \otimes \sigma)$. Since the formula is read-once, bits $\{x_1, \ldots, x_l\}$ are independent of bits $\{x_{l+1}, \ldots, x_n\}$. Thus if there are no $\rho_1$, $\rho_2$ and $\sigma$ that satisfy the above condition, then the output of $\Phi$ is independent of $\{x_1, \ldots, x_l\}$ and we may replace $\rho$ by a channel $\Psi_1$ that on any input (and classical inputs in particular) outputs some fixed element of $P$.
Note that this is the only place in the proof in which the read-once condition is required.

Assume that such a $\rho_1$, $\rho_2$ and $\sigma$ exist. For this fixed $\sigma$, define $\Phi'(\rho) := \Psi(\Phi(\rho \otimes \sigma))$.  The output of $\Phi'$ is a classical state and therefore $\rho_1$ and $\rho_2$ induce orthogonal classical outputs. Thus $||\Phi'(\rho_1) - \Phi'(\rho_2)||_1=2$ and   by \factref{fact:output-orthogonal-implies-input-orthogonal}, we know that
$\rho_1$ and $\rho_2$ are pure and orthogonal. Let $\Psi_1$ be a unitary channel such that $\Psi_1(\rho_1) = \ket 0 \bra 0$ and $\Psi_1(\rho_2) = \ket 1 \bra 1$.

We now show that $|P| = 2$.  That is, $\rho_1$ and $\rho_2$ are the \emph{only} states in $P$.  Assume that $|P| > 2$, and let $\rho_3 \in P$ be distinct from $\rho_1$, $\rho_2$ and such that $\Phi'(\rho_3) \neq \Phi'(\rho_1)$.  Since $\Phi'(\rho)$ is classical for any $\rho \in P$, using \factref{fact:output-orthogonal-implies-input-orthogonal} we again have that $\rho_1$ and $\rho_3$ are pure and orthogonal. But since $\rho_2$ is the unique state orthogonal to $\rho_1$, we conclude that $\rho_3 = \rho_2$, a contradiction.  Assuming that $\Phi'(\rho_3) \neq \Phi'(\rho_2)$ similarly leads to a contradiction.

A similar argument applies for the set of possible states on the remaining input qubits of $\Phi$.  For each qubit $k$ there are two possible (pure) states $\rho_0$ and $\rho_1$ and a unitary channel $\Psi_k$ such that $\Psi_k(\rho_0) = \ket{0}\bra{0}$ and $\Psi_k(\rho_1) = \ket 1 \bra 1$, or the action of $\Phi$ is independent of qubit $k$ and we may replace it with an input-independent channel $\Psi_k$. We now add the gates $\Psi_k$ before gate $\Psi$ on input qubit $k$. If $\Psi_k$ was an input-independent channel, we have not changed the output of the circuit. If $\Psi_k$ was a unitary channel that changes basis, we now need to add $\Psi^\dagger_k$ before it to ensure that the output is unchanged. The channel formed by the $\Psi_k$s, $\Phi$ and the output gate $\Psi'_m$ is now classical and has the same action as a classical gate $R \in \hat G$ as shown in~\figref{fig:quantum-to-classical-channel-replacement}. 

\begin{figure}
\centering
\includegraphics[height=3.2cm]{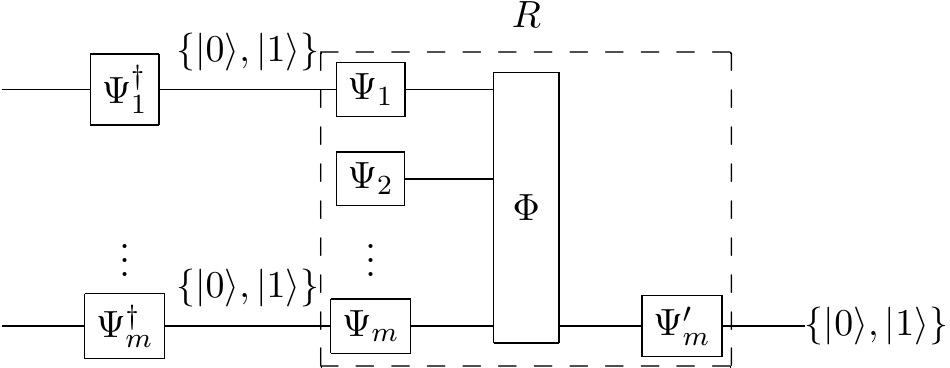}
\caption{\label{fig:quantum-to-classical-channel-replacement}
Single qubit channels are prepended to input qubits.  If the input qubit depends on the input to the formula then both the channel and its adjoint are prepended. Otherwise, the input qubit is fixed and is replaced by a single channel. The channel $\Phi$ is then replaced by a classical gate $R \in \hat G$.
}
\end{figure}

The inputs to $R$ are quantum sub-formulas of depth at most $d-1$, each of which outputs a classical bit.  By induction, each sub-formula may be replaced by a classical formula of the same depth over $\hat G$.  The resulting circuit is a depth-$d$ classical formula over $\hat G$ that accepts language $L$.
\end{proof}

\subsection{Extension to the bounded-error case}
\label{sec:dequantization-bounded-error}

We now extend the main result to the bounded-error setting. 
The proof is essentially the same as that of~\thmref{thm:exact-qrof}, but does not require that the formula outputs be orthogonal states.
This section can be safely skipped without loss on continuity.

There are several natural definitions of bounded-error quantum formulas. For example, we could say that a quantum formula over a gate set $G$ computes a function $f$ with error $\epsilon$, if on input $x$, the output of the formula is  $\epsilon$-close to $\ket{f(x)}$ in trace distance. However, for bounded-error, most reasonable definitions will be equivalent (up to constant factors). So we use the following definition which is more convenient for our proofs: A quantum formula over $G$ computes $f$ with error $\delta$ if all the output states corresponding to $f(x)=0$ are at least $2-\delta$ away in trace distance from all the output states corresponding to $f(x)=1$. 

This definition is equivalent to the previous one, up to constants, by noting that if one state is close to $\ket{0}$ and the other is close to $\ket{1}$, then they must necessarily by far apart, which can be proved using the triangle inequality. In the other direction, using a result of Gutoski and Watrous~\cite{GW05}, it can be shown that two sets of states with a lower bound on the minimum pairwise distance can be distinguished with high probability.  Note that when $\delta = \epsilon = 0$, the definitions coincide, except that the second definition only requires the output states to be orthogonal, not necessarily $\ket{0}$ and $\ket{1}$. Since our gate 
sets contain all single-qubit gates, this distinction is not important. The definition of $\hat{G}$ will also have to be similarly modified to incorporate bounded-error quantum formulas over $G$.

In the exact case, the output of a formula is always either $\ket{0}$ or $\ket{1}$. Furthermore, we showed that even the output of sub-formulas is classical, up to a change of basis. We now want to prove a similar claim for bounded-error sub-formulas. We wish to show that if the function depends on the output of a sub-formula, then the set of output states of the sub-formula can be partitioned into two non-empty sets, $S_0$ and $S_1$, such that the trace distance between any pair of states in $S_0$ and $S_1$ is at least $2-\delta$.

Now we can state the bounded-error analog of \thmref{thm:exact-qrof}. Note that the definition of $\hat{G}$ used in this theorem is different from that in \thmref{thm:exact-qrof}, since it allows functions to be computed with bounded error.

\begin{theorem}
\label{thm:bdd-error-qrof}
Let $G$ be a set of quantum channels that includes all single-qubit channels. If a language is accepted by a bounded-error read-once quantum formula over $G$ then it is also accepted by an exact read-once classical formula over $\hat G$, of the same size and depth, where $\hat G$ is defined as the set of all classical gates that can be computed by a depth-one bounded-error read-once quantum formula.
\end{theorem}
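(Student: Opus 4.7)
The plan is to mirror the induction from the proof of Theorem 1, replacing the exact statement of Fact 1 with a quantitative version. Concretely, I would first establish: for any single-qubit channel $\Phi$ and density matrices $\rho,\sigma$, if $\|\Phi(\rho)-\Phi(\sigma)\|_1 \geq 2-\delta$ then there exist orthogonal pure states $\ket{\psi},\ket{\phi}$ with $\|\rho - \ket{\psi}\bra{\psi}\|_1$ and $\|\sigma - \ket{\phi}\bra{\phi}\|_1$ both at most $g(\delta)$ for some explicit $g$ satisfying $g(0)=0$. The monotonicity argument from Fact 1 gives $\|\rho-\sigma\|_1 \geq 2-\delta$ immediately, and a Bloch-ball (or Fuchs--van de Graaf) calculation shows that saturating this bound forces purity and orthogonality up to error $g(\delta)=O(\sqrt{\delta})$.

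Armed with this, I would prove the theorem by induction on depth. The base case $d=1$ is immediate from the modified definition of $\hat G$, which by assumption absorbs all classical functions computable by depth-one bounded-error formulas over $G$. For the induction step, consider the final multi-qubit gate $\Phi$ and output channel $\Psi$. Let $\Phi$ act on $m$ qubits; by the read-once property its $k$th input depends on a disjoint subset of the input bits, inducing a set $P_k$ of possible states. For the first input qubit, either $\Phi$ is effectively independent of it (and we replace it with a fixed state, as in the exact proof), or there exist $x_B^*$ and $\rho_1,\rho_2 \in P_1$ with $f(x_A^1,x_B^*) \neq f(x_A^2,x_B^*)$ so that the full formula's outputs on the corresponding inputs are $(2-\delta)$-separated. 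The approximate Fact 1, applied to the single-qubit channel $\rho \mapsto \Psi(\Phi(\rho \otimes \sigma_{x_B^*}))$, forces $\rho_1,\rho_2$ to be $g(\delta)$-close to pure orthogonal states. The crucial pigeonhole step is that any third state $\rho_3 \in P_1$ must still be $g(\delta)$-close to either $\rho_1$ or $\rho_2$: since $f(x_A^1,x_B^*) \neq f(x_A^2,x_B^*)$, the value $f(x_A^3,x_B^*)$ differs from exactly one of them, so the bounded-error assumption forces the corresponding output pair to be $(2-\delta)$-separated and hence pins $\rho_3$ near whichever of $\rho_1,\rho_2$ it did not disagree with. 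A unitary $\Psi_1$ then rotates the two clusters approximately onto $\ket{0}\bra{0}$ and $\ket{1}\bra{1}$; applying the same argument to every input qubit of $\Phi$ and packaging $\Phi$ with all the $\Psi_k$'s, $\Psi_k^\dagger$'s, and the final $\Psi$ into a single depth-one bounded-error quantum formula produces a classical gate $R \in \hat G$. Each depth-$(d-1)$ sub-formula feeding $R$ then computes a classical bit with bounded error, so the induction hypothesis completes the replacement.

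The main obstacle I foresee is controlling error accumulation. The clustering at each level is only $g(\delta)$-exact, and this approximation degrades the bounded-error parameter of the sub-formulas before the induction hypothesis is invoked. I would handle this by choosing the initial error parameter sufficiently small as a function of the depth (equivalently, by noting that the modified definition of $\hat G$ absorbs any constant base-case error, so it suffices to keep the cumulative slack strictly below $2$ throughout). A more delicate point is that the argument uses a specific witness $x_B^*$ for the distinguishability of $\rho_1$ and $\rho_2$; if a particular $\rho_3$ is only distinguishable from $\rho_1$ (or $\rho_2$) via a different assignment $x_B'$, the pigeonhole step must be repeated for that assignment. This is handled by the same case analysis, and the dimension-two fact that at most one pure state is orthogonal to a given pure state (used already in the exact proof) guarantees that the union of clusters obtained from different witnesses still partitions $P_1$ into only two well-separated groups.
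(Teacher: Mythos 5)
Your plan imports more machinery than the theorem needs, and the extra machinery is where it breaks. The quantitative version of Fact~\ref{fact:output-orthogonal-implies-input-orthogonal} (inputs $g(\delta)$-close to orthogonal pure states, clusters rotated approximately onto $\ket{0}\bra{0}$ and $\ket{1}\bra{1}$) introduces a loss of $g(\delta)=O(\sqrt{\delta})$ at every level of the induction, and you have no way to pay for it: the theorem starts from a \emph{given} bounded-error formula with a \emph{fixed} error parameter $\delta$, so ``choosing the initial error parameter sufficiently small as a function of the depth'' is not an available move. With constant $\delta$ and depth $d$, your cumulative slack grows with $d$ and the argument collapses for formulas of super-constant depth. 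This is a genuine gap, not a technicality to be absorbed by the definition of $\hat G$ (the modified $\hat G$ only absorbs error in the single depth-one gate at the top of each level, not the degradation you inflict on the sub-formulas before invoking the induction hypothesis).

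The paper's proof avoids all of this by exploiting its chosen definition of bounded error: a formula has error $\delta$ if every output state for $f(x)=0$ is at least $2-\delta$ away in trace distance from every output state for $f(x)=1$. Fix $\sigma$ and partition $P$ into $S_0$ and $S_1$ by which output cluster $\Psi(\Phi(\rho\otimes\sigma))$ lands in. Monotonicity of trace distance alone then gives $\|\rho_0-\rho_1\|_1\ge 2-\delta$ for all $\rho_0\in S_0$, $\rho_1\in S_1$ --- so the sub-formula \emph{already is} a bounded-error quantum formula with the \emph{same} $\delta$, and the induction hypothesis applies with zero accumulated loss. No approximate purity or orthogonality is needed. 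Instead of rotating approximate clusters with a unitary, the paper replaces each sub-formula by its classical equivalent followed by a channel preparing a fixed representative $\rho_0\in S_0$ or $\rho_1\in S_1$, and justifies this by noting that any two states in the same $S_b$ are within $2\delta$ of each other (triangle inequality via a common far-away state in $S_{1-b}$), hence produce outputs in the same cluster $T_b$ of the last gate. If you restructure your argument around the separation-transfer observation rather than approximate orthogonality, the error-accumulation problem disappears and your witness-dependence worry is also resolved, since the partition is defined by the function value at a single fixed $\sigma$.
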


\begin{proof}
The proof proceeds like the proof of \thmref{thm:exact-qrof}. We use the same notation as in the other proof. As before, we consider the first input of the last multi-qubit gate, and assume it depends on the input bits $\{x_1, x_2, \ldots, x_l\}$, and let $P$ denote the set of all states obtained by enumerating over the the input bits  $\{x_1, x_2, \ldots, x_l\}$. Denote the state on the rest of the inputs to the last gate as $\sigma$.

First we look for two states in $P$, $\rho$ and $\rho'$, and a bit string $x_{l+1}\ldots x_n$ which induces a state $\sigma$ on the other inputs such that $\rho \otimes \sigma$ and $\rho' \otimes \sigma$ lead to different outputs of the formula. If no such $\rho$, $\rho'$ and $\sigma$ exist, then the output of the last gate is independent of $\{x_1, \ldots, x_l\}$ and we may replace the first input qubit by a channel that on any input (and classical inputs in particular) outputs some fixed element of $P$.

Assume that such a $\rho$, $\rho'$ and $\sigma$ exist. As before, for this fixed $\sigma$, define $\Phi'(\rho) := \Psi(\Phi(\rho \otimes \sigma))$. 
Without loss of generality, assume that $\rho$ leads to output $0$ and $\rho'$ leads to output 1. Let $S_b :=\{\rho \in P: \rho \text{ leads to output $b$}\}$. Then for any states $\rho_0 \in S_0$ and $\rho_1 \in S_1$ we must have  $||\Phi'(\rho_0) - \Phi'(\rho_1)||_1 \geq 2-\delta$. Using the monotonicity of trace distance under quantum channels~\cite[(9.72)]{Wil11}, this implies $||\rho_0 - \rho_1||_1 \geq 2-\delta$, which means all the states in $S_0$ are far from all the states in $S_1$. Thus the first sub-formula satisfies the definition of a bounded-error quantum formula. By the induction hypothesis, there is a classical formula over $\hat{G}$ of the same size and depth that outputs $0$ when the quantum sub-formula would have output some state in $S_0$ and outputs $1$ when it would have output some state in $S_1$.

Now we wish to show that all the states in $S_b$ are equivalent from the perspective of the last gate. More precisely, let $\rho_0$ and $\rho'_0$ be two different states in $S_0$, and let $\sigma$ be some valid state induced by $x_{l+1}, \ldots x_n$. Let us also partition the output states of the last gate into two sets  $T_0$ and $T_1$, for which the trace distance between any pair of states in the two sets is at least $2 - \delta$. We wish to show that if the output corresponding to $\rho_0 \otimes \sigma$ is in $T_b$, then so is the output corresponding to $\rho'_0 \otimes \sigma$. 
States $\rho_0 \otimes \sigma$ and $\rho'_0 \otimes \sigma$ arise due to valid input strings $x$ and $x'$, respectively. Thus the corresponding outputs must be in $T_0$ or $T_1$. Since the output of $\rho_0 \otimes \sigma$ is in $T_b$, and since $\rho_0$ cannot be too far from $\rho'_0$,  the output of $\rho'_0 \otimes \sigma$ is also in $T_b$. More precisely, since both $\rho_0$ and $\rho'_0$ are at least $2-\delta$ away from some state $\rho_1 \in S_1$, we know that they can be at most $2\delta$ distance apart. Thus, by monotonicity of trace distance under quantum channels, their corresponding outputs cannot be too far apart.

Now we know that all the states in $S_b$ are equivalent from the perspective of the last gate. We also know that there is a classical formula over $\hat{G}$ of the same size and depth that outputs $0$ when the quantum sub-formula would have output some state in $S_0$ and outputs $1$ when it would have output some state in $S_1$.
We can just add a single-qubit gate to the output of this sub-formula, which on input $\ket{0}$ outputs some fixed state $\rho_0$ from $S_0$ and on input $\ket{1}$ outputs a fixed state $\rho_1$ from $S_0$. This formula also computes the same function $f(x)$, as shown in the previous paragraph. 

Continuing this on all the inputs to the last gate, we obtain a depth-one quantum formula which accepts as inputs the outputs of a classical depth $d-1$ formula. Using the definition of $\hat{G}$, there is a classical gate that computes the same function, which gives us a classical depth-$d$ formula for the entire function.
\end{proof}

\subsection{Application to concrete gate sets}
\label{sec:concrete-gate-sets}
A simple corollary of~\thmref{thm:exact-qrof} is that $L$ is accepted by a read-once quantum formula over the set of all $k$-qubit channels (for some constant $k$) if and only if it is accepted by a read-once classical formula over the set of all $k$-bit functions. This is the gate set used in the previous studies of quantum formulas~\cite{Yao93,RV02}.

Another gate set of interest is the set of arbitrary fanin Toffoli gates and all single-qubit gates.  This gate set is commonly used in the study of quantum circuits~\cite{GHMP02}. We now explicitly construct $\hat G$ for this gate set and show that it only contains classical arbitrary fanin Toffoli gates and the \NOT gate.

Let us define $\Phi^{\text{Tof}}_m$ to be the quantum channel obtained from a Toffoli gate with $m \geq 2$ qubits by tracing out all the $m-1$ control qubits. We can assume that all Toffoli gates that appear in the formula always output the target qubit and trace out the control qubits because by conjugating the target qubit and a control qubit with the Hadamard matrix, $H$, it is possible to exchange the roles of the target and that control qubit.

To compute $\hat G$ for the set of all single-qubit gates and $\Phi^{\text{Tof}}_m$ for $m \geq 2$ it suffices to list all possible classical gates that can be obtained from $\Phi^{\text{Tof}}_m$ by placing single-qubit channels before and after it. The classical $m$-bit Toffoli gate computes the function $f(x_1, x_2, \ldots, x_m) = (x_1 \wedge x_2 \wedge \ldots \wedge x_{m-1}) \oplus x_m$. Let $F^{\text{Tof}}_m$ be the set of all functions obtained from the classical $m$-bit Toffoli by placing single-bit gates before and after it. The only single-bit gates are \NOT gates, and channels that output a constant bit.

\begin{theorem}
\label{thm:toffoli-qformula-equals-formula}
Let $f$ be any function on $x \in \{0,1\}^m$ that can be obtained by placing single-qubit gates before and after $\Phi^{\textrm{Tof}}_m$. Then $f \in F^{\textrm{Tof}}_m$.
\end{theorem}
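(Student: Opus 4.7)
The plan is to specialize the dichotomy from the proof of~\thmref{thm:exact-qrof} to the depth-one quantum formula
$$\Psi \circ \Phi^{\textrm{Tof}}_m \circ (\Psi_1 \otimes \cdots \otimes \Psi_m),$$
whose action on classical inputs is $f$, and then verify that the resulting classical gate has the particular form required by $F^{\textrm{Tof}}_m$. Writing $\rho_i^0 := \Psi_i(\ket{0}\bra{0})$ and $\rho_i^1 := \Psi_i(\ket{1}\bra{1})$ for each input position $i \in \{1,\ldots,m\}$, I would treat each input wire separately and show that its pre-channel $\Psi_i$ can be replaced without changing $f$ by a \NOT gate, the identity, or a constant-bit channel.

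For a fixed $i$, either $f$ does not depend on $x_i$, in which case I would swap $\Psi_i$ for the constant channel that always outputs $\ket{0}\bra{0}$, corresponding to hard-wiring the $i$-th classical Toffoli input to $0$; or there is an assignment of the other bits for which flipping $x_i$ toggles the formula's output. In the second case, the two outputs $\ket{0}$ and $\ket{1}$ are orthogonal classical states, so bundling ``everything from $\Phi^{\textrm{Tof}}_m$ onwards with the other bits held fixed'' into a single effective single-qubit channel and applying~\factref{fact:output-orthogonal-implies-input-orthogonal} forces $\rho_i^0$ and $\rho_i^1$ to be pure and orthogonal. A single-qubit unitary channel therefore rotates $\{\rho_i^0,\rho_i^1\}$ to $\{\ket{0}\bra{0},\ket{1}\bra{1}\}$, possibly in swapped order; absorbing this unitary into $\Psi_i$ makes the effective transformation on the $i$-th wire either the identity or a bit flip, which is exactly a \NOT (or identity) in $F^{\textrm{Tof}}_m$.

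Once every $\Psi_i$ has been reduced in this way, the state entering $\Phi^{\textrm{Tof}}_m$ is always a computational basis state $\ket{x_1'\cdots x_m'}$. Because $\Phi^{\textrm{Tof}}_m$ arises from the quantum Toffoli unitary by tracing out the controls, its action on basis states is exactly the classical Toffoli, outputting $\ket{x_m'\oplus(x_1'\wedge\cdots\wedge x_{m-1}')}$. Then $\Psi$ sees only $\ket{0}\bra{0}$ or $\ket{1}\bra{1}$, and since the overall output must be classical, the restriction of $\Psi$ to those two inputs is a single-bit classical channel, i.e.\ the identity, \NOT, or a constant. Composing the input-side gates, the classical Toffoli, and this output-side single-bit channel exhibits $f$ as an element of $F^{\textrm{Tof}}_m$.

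The main obstacle is the second half of the input-side dichotomy: showing that whenever $f$ genuinely depends on $x_i$, the candidate input states $\rho_i^0$ and $\rho_i^1$ must actually be pure and orthogonal. This is precisely where \factref{fact:output-orthogonal-implies-input-orthogonal} is invoked, mirroring the key step of \thmref{thm:exact-qrof}; once it is in place the rest is essentially bookkeeping, since the Toffoli channel acts classically on basis inputs and a classical-on-classical single-qubit channel can only be one of four gates.
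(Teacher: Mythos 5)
There is a genuine gap at the heart of your input-side reduction. Via \factref{fact:output-orthogonal-implies-input-orthogonal} you correctly conclude that, whenever $f$ depends on $x_i$, the two states $\rho_i^0,\rho_i^1$ are pure and orthogonal. But you then say a unitary ``rotates $\{\rho_i^0,\rho_i^1\}$ to $\{\ket{0}\bra{0},\ket{1}\bra{1}\}$'' and that ``absorbing this unitary into $\Psi_i$'' makes the wire classical. Absorbing that unitary changes the state fed into $\Phi^{\textrm{Tof}}_m$, and hence changes the function the circuit computes; to preserve $f$ you must also insert the inverse unitary between the now-classical input and the Toffoli channel, at which point you are staring at a Toffoli channel conjugated by arbitrary single-qubit unitaries and you have not shown that this composite lies in $F^{\textrm{Tof}}_m$ --- which is exactly what the theorem asserts. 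Pure and orthogonal is genuinely weaker than what you need: the pair $\{\ket{+}\bra{+},\ket{-}\bra{-}\}$ is pure and orthogonal, yet feeding $\ket{+}$ into a control of $\Phi^{\textrm{Tof}}_m$ and tracing it out sends the target $\rho_m$ to $\tfrac{1}{2}(\rho_m + X\rho_m X)$, which is mixed (hence not classical after any $\Psi$) unless $\rho_m$ is an $X$-eigenstate, in which case the gate ignores the controls entirely. So the claim ``the state entering $\Phi^{\textrm{Tof}}_m$ is always a computational basis state'' does not follow from your argument.

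The missing ingredient is a computation specific to the Toffoli channel. The paper writes the output of $\Phi^{\textrm{Tof}}_m$ on a product input as $a\,X\rho_m X + (1-a)\rho_m$, where $a=\prod_{i=1}^{m-1}\rho_i(1,1)$ is the product of the $(1,1)$ diagonal entries of the control states. Requiring this output to be pure (again via \factref{fact:output-orthogonal-implies-input-orthogonal}, since the final output is classical and non-constant) while $X\rho_m X\neq\rho_m$ forces $a\in\{0,1\}$ on every input, which in turn forces each control state to be \emph{diagonal in the computational basis} --- i.e.\ exactly $\ket{0}\bra{0}$ or $\ket{1}\bra{1}$, not merely some orthogonal pure pair. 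Only then does each $\Psi_i$ become a genuine classical one-bit gate in the original circuit, with no compensating unitary needed, and the rest of your bookkeeping (classical Toffoli on basis states, classical one-bit gate on the output) goes through. Your wire-by-wire dichotomy is a reasonable opening move, but without this diagonality argument the proof does not close.
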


\begin{proof}
Let the classical function $f$ be defined by single-qubit channels $\Psi_1, \ldots, \Psi_m$ on the inputs followed by $\Phi^{\textrm{Tof}}_m$ and a single-qubit quantum channel $\Psi'_m$ on the output qubit. 

Let $\{\rho_1, \ldots, \rho_m\}$ be the inputs to $\Phi^{\textrm{Tof}}_m$ (i.e., the outputs of $\Psi_1, \ldots, \Psi_m$) induced by a particular choice of $x$, and $\rho'_m$ be the output of the channel (i.e., the input to $\Psi'_m$). See~\figref{fig:depth-one-qformula-with-toffoli}.

\begin{figure}
\centering
\includegraphics[height=4cm]{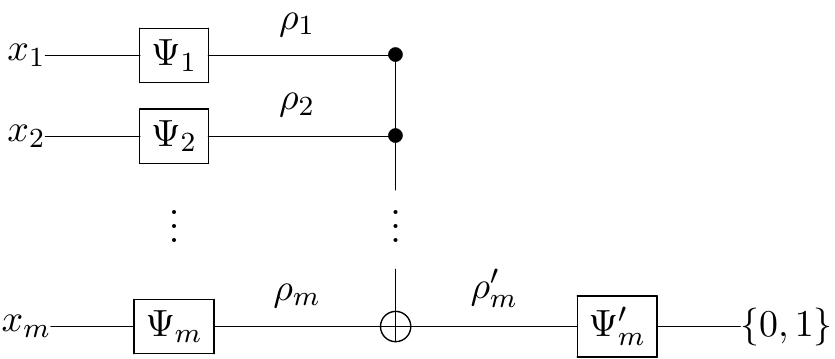}
\caption{\label{fig:depth-one-qformula-with-toffoli}
A depth-one quantum formula over single-qubit gates and an $m$-bit Toffoli gate.
}
\end{figure}

First let us handle the trivial cases. If $\Psi_m$ is a channel which always outputs an eigenstate of $X$, i.e., $X \rho_m X = \rho_m$ for $x_m = 0$ and $x_m = 1$, then the Toffoli gate leaves it unaffected. Thus the output  $\rho'_m$ is only a function of $x_m$, and since the output of the gate is classical, it is some classical one-bit function of $x_m$, all of which are contained in $F^{\textrm{Tof}}_m$. We may now assume this is not the case.
Let us also assume that $f$ outputs $0$ on some input and $1$ on some input, otherwise it is a constant function.

For any $x, x' \in \{0,1\}^{m-1}$, define $\alpha(x,x') = \prod_{i=1}^{m-1} \rho_i(x_i,x'_i)$. Furthermore, let $0 \leq a \leq 1$ be defined as $a = \alpha(11\ldots 1, 11 \ldots 1)$.
Then the input to $\Phi^{\textrm{Tof}}_m$ is $\rho_1 \otimes \rho_2 \otimes \ldots \otimes \rho_m = \sum_{x,x'} \alpha_{xx'} \ket{x}\bra{x'} \otimes  \rho_m$. The output of $\Phi^{\textrm{Tof}}_m$ is given by
\begin{equation}
\begin{split}
  \rho'_m
  &= \Tr_{1,\ldots,m-1} \left ( \sum_{x,x'} \alpha_{xx'} \ket{x}\bra{x'} \otimes X^{\AND(x)} \rho_m X^{\AND(x')} \right) \\
  &= \sum_{x} \alpha_{xx} X^{\AND(x)} \rho_m X^{\AND(x)} \\
  &= \left [\prod_{i=1}^{m-1} \rho_i(1,1) \right] X \rho_m X + \left[ \sum_{x \neq 1\ldots1} \prod_{i=1}^{m-1} \rho_i(x_i,x_i) \right] \rho_m \\
  &= a X \rho_m X + (1-a) \rho_m.
\end{split}
\label{eq:toffoli-output-qubit}
\end{equation}

By assumption, $\Psi'_m(\rho'_m)$ is classical and both outputs $0$ and $1$ are possible, thus by \factref{fact:output-orthogonal-implies-input-orthogonal}, $\rho'_m$ must always be pure. Since we have also assumed that $X \rho_m X \neq \rho_m$, \eqnref{eq:toffoli-output-qubit} implies that $\rho'_m$ can be pure only if $a=0$ or $a=1$. But $a$ takes values $0$ or $1$ on all inputs, so the following short argument implies that each $\rho_i$ must be classical (i.e., $\rho_i(0,0) = 0$ or $\rho_i(0,0) = 1$) for every input $x$. Toward a contradiction, assume this is not the case. Thus, for some $i$, there exists an input $x_i$ so that $\rho_i$ is not classical and, in particular, $\rho_i(1,1) \neq 0$ and $\rho_i(1,1) \neq 1$. On this input $x$, $\prod_{i=1}^{m-1} \rho_i(1,1)$ cannot be $0$ or $1$. But this quantity is $a$, which could only be 0 or 1. Thus we have reached a contradiction, and all the $\rho_i$ must be classical for every input $x_i$. But this means every channel $\Psi_i$ is a 
classical one-bit function of $x_i$. Thus the inputs to $\Phi^{\textrm{Tof}}_m$ are always classical and so $\Phi^{\textrm{Tof}}_m$ can be replaced with its classical equivalent, which completes the proof.
\end{proof}

This theorem shows that the only classical gates obtainable from $\Phi^{\text{Tof}}_m$ by placing single-qubit channels before and after it are the classical Toffoli gate of size $m$ and gates derived from it using single-bit gates. In particular, this means that if the gate set $G$ contains all single-qubit channels and the \CNOT gate, which is the Toffoli gate for $m=2$, then $\hat G$ contains only the \CNOT gate and the \NOT gate. We use this result in~\secref{sec:one-qubit-model}.

\section{One-qubit model}
\label{sec:one-qubit-model}

Informally, \thmref{thm:exact-qrof} and \thmref{thm:bdd-error-qrof} show that read-once quantum formulas are equivalent to classical read-once formulas. We now show that the claim is no longer true if we drop the read-once constraint. To this end, we introduce a new model, which we call the \emph{one-qubit} model.  This model is independent of the previous sections, and may be of interest outside of the context of quantum formulas.

The one-qubit model consists of a single qubit initialized to $\ket{0}$ followed by an alternating sequence of single-qubit unitaries and \CNOT (controlled-$X$) gates. The control of each \CNOT is a bit $x_i$ of the input $x$. The output is determined by a measurement in the standard basis. See \figref{fig:oqq}. We call an algorithm of this kind a one-qubit program. We say that a one-qubit program exactly computes a function $f$ if the output of the measurement is $f(x)$ with probability one on input $x$.

\begin{figure}[h]
\centering
\includegraphics[height=1.2cm]{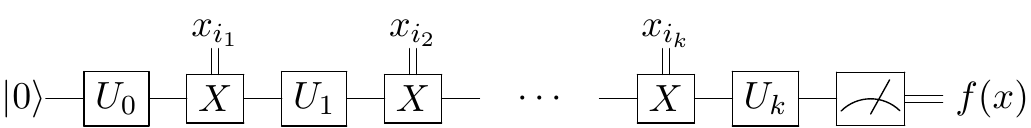}
\caption{\label{fig:oqq}
 A one-qubit program of length $k$.}
\end{figure}

The \emph{length} of a one-qubit program is defined as the total number of \CNOT gates.  For instance, the program in \figref{fig:oqq} has length $k$. Notice that the model is unchanged if, instead of \CNOT, we choose any controlled-$V$ gate such that $UVU^\dagger = X$ for some single-qubit unitary $U$.
Note also that input bits may be re-used as many times as desired.

We prove that the one-qubit model is universal, that is, it can compute all boolean functions.
Specifically, if a boolean function has a depth-$d$ circuit over fanin-$2$ \AND and \OR, and \NOT gates, then it has a one-qubit program of length $4^d$.
Here, the depth of a circuit is defined as the maximum number of \AND or \OR gates on any path from the output to an input.
Note that all boolean functions can be expressed as circuits of polynomial depth, thus one-qubit programs of exponential length can compute all boolean functions.
Moreover, one-qubit programs with polynomial length can exactly compute any function in \cl{NC^1}, the set of functions computed by log-depth poly-size bounded-fanin circuits over \AND, \OR and \NOT. Our proof resembles the original proof of Barrington's theorem~\cite{Bar89}, a surprising result in complexity theory, which showed that bounded-width branching programs can compute any function in \cl{NC^1}.

\begin{theorem}
\label{thm:oqq-contains-nc1}
The one-qubit model can compute all boolean functions. More precisely, any function that has a depth-$d$ circuit over fanin-$2$ \AND and \OR, and \NOT gates can be computed exactly by a one-qubit program of length $4^d$.
\end{theorem}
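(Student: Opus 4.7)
The plan is to prove the theorem by induction on the depth $d$, mirroring Barrington's argument, where a bit $b$ is encoded by a unitary $R^b$ ($R^0 = I$, $R^1 = R$) and a logical \AND is realized through a commutator. The base case $d = 0$ is direct: for a literal $\ell \in \{x_i, \neg x_i\}$, the length-$1$ program $X^{x_i}\ket{0}$, optionally preceded by an $X$, produces $\ket{\ell(x)}$.

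For the inductive step I would strengthen the invariant: for any depth-$d$ Boolean function $h$ and any $R$ in the conjugacy class $\mathcal{C}$ of $\pi$-rotations in $\mathrm{SU}(2)$ (i.e., $R = -i\,\hat{r}\cdot\vec{\sigma}$ for some unit vector $\hat{r}$), there is a sequence of $4^d$ \CNOT gates interleaved with single-qubit unitaries whose overall action is the unitary $R^{h(x)}$. Given the hypothesis at depth $d$, for $f = h_1 \wedge h_2$ of depth $d+1$ I would pick $A, B \in \mathcal{C}$ with $[A,B] = R$ and concatenate sub-programs implementing $A^{h_1(x)}$, $B^{h_2(x)}$, $A^{-h_1(x)}$, $B^{-h_2(x)}$; the identity $A^{b_1}B^{b_2}A^{-b_1}B^{-b_2} = [A,B]^{b_1 \wedge b_2}$ then yields $R^{f(x)}$ in length $4 \cdot 4^d = 4^{d+1}$. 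The \OR case is handled by De Morgan, $h_1 \vee h_2 = \neg(\neg h_1 \wedge \neg h_2)$, using $R^{1-b} = R\cdot R^{-b}$ and noting that pushing negations to the literals preserves depth.

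Three auxiliary facts make the induction go through. First, change of representation: since $U[A,B]U^{-1} = [UAU^{-1}, UBU^{-1}]$ and the outer $U^{\pm 1}$ can be absorbed into the first and last single-qubit gates of a program fragment, a length-$\ell$ program for $R^{h(x)}$ gives a length-$\ell$ program for any conjugate $(URU^{-1})^{h(x)}$. Second, closure of $\mathcal{C}$ under commutators: writing $A = -i\vec{a}\cdot\vec{\sigma}$ and $B = -i\vec{b}\cdot\vec{\sigma}$ with $\vec{a}\cdot\vec{b} = 1/\sqrt{2}$, a direct computation gives $[A,B] = (AB)^2 = i\,\hat{d}\cdot\vec{\sigma} \in \mathcal{C}$ with axis along $\vec{a}\times\vec{b}$, and every target axis is attainable by choosing $\vec{a},\vec{b}$ in the plane perpendicular to it. Third, inversion: reversing a sub-program and taking adjoints of its single-qubit gates (using $X^{-1}=X$) yields a sub-program for $R^{-h(x)}$ of the same length.

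The subtle step is gluing the depth-$0$ base case to the depth-$\geq 1$ machinery, because a length-$1$ fragment $V^{-1}X^{x_i}V$ naturally produces a reflection (conjugate to $X$, with determinant $-1$), not an element of $\mathcal{C} \subset \mathrm{SU}(2)$. I would treat depth $1$ separately: for $h = \ell_1 \wedge \ell_2$, pick two reflections $A,B$ with axes at angle $\pi/4$ so that $[A,B] \in \mathcal{C}$ coincides with the desired $R$; each factor $A^{\ell_i}$ (and $A^{-\ell_i} = A^{\ell_i}$ since $A^2 = I$) is implemented by a length-$1$ block, yielding a length-$4$ program for $R^{h(x)}$. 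From depth $2$ onward, $A, B, R$ all lie in $\mathcal{C}$ and the induction proceeds uniformly. To read out $f(x)$ at the top level, I would set $R = -iY$ so that $R\ket{0} = \ket{1}$; applying the length-$4^d$ program for $R^{f(x)}$ to $\ket{0}$ produces $\ket{f(x)}$, and a standard-basis measurement returns $f(x)$ with probability one. The main obstacle I anticipate is the bookkeeping around the two conjugacy classes and verifying that the commutator map surjects onto $\mathcal{C}$ with the intended axis correspondence; once this is in place, the rest reduces to routine Barrington-style manipulations.
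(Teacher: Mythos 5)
Your proposal is correct and follows essentially the same Barrington-style route as the paper: encode a bit $b$ as $A^{b}$ for a suitable involution or $\pi$-rotation, realize \AND via the group commutator $A^{b_1}B^{b_2}A^{-b_1}B^{-b_2}=[A,B]^{b_1\wedge b_2}$, conjugate by a fixed single-qubit unitary to change representation, and handle \OR by De Morgan. The paper sidesteps your two-conjugacy-class gluing and the need for reversed sub-programs by keeping the invariant $X^{C(x)}$ throughout and instantiating the commutator with the involutions $A=X$, $B=H$ (so $A^{-b}=A^{b}$), using $V(XHXH)V\propto X$ to restore the representation---a simplification of the same idea rather than a different proof.
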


\begin{proof}
Use the notation $X^{x}$ to denote a \CNOT gate that is controlled by the variable $x$.
Let $C$ be a circuit of depth $d$ that computes a function $C(x)$. Starting from the circuit $C$, we construct a one-qubit program $F$ of length $4^d$ that computes the same function. That is, $F = X^{C(x)}$ so that $F|0\> = |C(x)\>$.

We prove the claim by induction on the structure of the circuit. Circuit $C$ can be seen as a binary tree with the input variables as the leaves of the tree and the gates as the internal nodes. Given a gate at the root of the circuit (\AND or \NOT), 
we assume that the induction hypothesis holds for the subcircuits that produce the inputs of the gate and 
we prove the claim for the entire circuit.

Let us start with the base case. 
If $C$ is a single variable $x_{i}$, the corresponding one-qubit program is a single \CNOT gate controlled by $x_{i}$, i.e., $F = X^{x_{i}}$.

The induction step consists of two cases. Suppose that $C$ is composed of a \NOT gate and a subcircuit $C'$, and let $F'$ be the one-qubit program of length $4^{d}$ that simulates $C'$.
Then we can append a Pauli-$X$ gate to $F'$ and obtain the program $F$ of the same length $4^{d}$ that computes $C$, i.e., $F = XX^{C'(x)} = X^{\bar{C'}(x)} = X^{C(x)}$. Note that the depth of circuits $C$ and $C'$ is the same, since \NOT gates do not contribute to the depth. Also note that single-qubit unitaries do not contribute to the length of the program, and thus the length is unchanged.

Now suppose $C$ is composed of an \AND gate that connects two subcircuits $C'$ and $C''$ and let $F'$ and $F''$ be the programs of length $4^{d-1}$ that compute $C'$ and $C''$, respectively.
Then consider the following program $F$ described as an equation:
\begin{equation}
F = VX^{C'(x)}H^{C''(x)}X^{C'(x)}H^{C''(x)}V = V(iY)^{C'(x) \wedge C''(x)}V = iX^{C(x)}, 
\end{equation}
where $V = \frac{1}{\sqrt{2}}(X + Y)$ is the unitary that satisfies $V^2 = \mathbbm{1}$, $VYV = X$ and $VXV=Y$.

This program effectively computes the \AND of two sub-programs. It starts with a $V$ gate, followed by the subprogram $F'$, which is equivalent to $X^{C'(x)}$. Then we need a subprogram that performs $H^{C''(x)}$. By the induction hypothesis, we have a subprogram $F''$ that performs $X^{C''(x)}$. Since the Hadamard matrix and Pauli X gate are unitarily equivalent, there is a unitary matrix $R$ such that $RXR = H$ and $RHR = X$. Conjugating $X^{C''(x)}$ with $R$ gives us $RX^{C''(x)}R$, which is the same as $(RXR)^{C''(x)}$, which is $H^{C''(x)}$. The other gates in $F$ are constructed similarly.

Thus the program $F$ requires some single qubit gates, two copies of the program for $F'$ and two copies of the program for $F''$. Since $F'$ and $F''$ have length at most $4^{d-1}$ and single qubit unitaries do not count towards the length, we get a program $F$ of length $4^d$, which performs $X^{C(x)}$ up to an irrelevant global phase.

These two cases suffice to prove the theorem since we can replace the \OR gates in the circuit by \AND and \NOT gates, without increasing its size or depth.  
\end{proof}
In the case of \cl{NC^1}, the depth of the circuit is at most $O(\log n)$, therefore the length of the resulting one-qubit program is polynomial.
\begin{corollary}
Any function in \cl{NC^1} can be computed exactly by a one-qubit program of polynomial length.
\end{corollary}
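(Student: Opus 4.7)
The plan is to obtain this corollary as an immediate consequence of \thmref{thm:oqq-contains-nc1}, combined with the standard definition of \cl{NC^1}. So the proof is a short calculation rather than a fresh argument; I would present it in a single short paragraph.

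First I would unpack what it means for a function $f$ to lie in \cl{NC^1}. By definition, there is a family of circuits over fanin-$2$ \AND, \OR, and \NOT gates of size $\mathrm{poly}(n)$ and depth $d(n) = O(\log n)$ that computes $f$ on $n$-bit inputs. Fix constants $c,n_0$ so that $d(n) \le c\log_2 n$ for all $n \ge n_0$.

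Next I would simply invoke \thmref{thm:oqq-contains-nc1} on this circuit, which hands us a one-qubit program computing the same function of length at most $4^{d(n)}$. The only remaining step is to bound this quantity: $4^{d(n)} \le 4^{c\log_2 n} = n^{2c}$, which is polynomial in $n$. For the finitely many small $n < n_0$, any boolean function admits a one-qubit program of constant length (e.g., via the exponential-length construction of the theorem applied to a depth-$n$ circuit), so those cases do not affect the asymptotics.

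There is no real obstacle here; the substantive content is already contained in the proof of \thmref{thm:oqq-contains-nc1}, and the corollary just records the consequence of applying that theorem with $d = O(\log n)$. The only thing worth being slightly careful about is that the $4^d$ bound in the theorem refers to circuits built from fanin-$2$ \AND, \OR, and \NOT, which matches exactly the gate set used in the definition of \cl{NC^1}, so no conversion overhead is incurred.
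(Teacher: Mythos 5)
Your proposal is correct and matches the paper's reasoning exactly: the paper derives the corollary by noting that an \cl{NC^1} circuit has depth $O(\log n)$, so \thmref{thm:oqq-contains-nc1} yields a one-qubit program of length $4^{O(\log n)} = \mathrm{poly}(n)$. Your extra care about small $n$ and the matching gate sets is harmless but not needed.
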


Now that we have shown that the one-qubit model can compute all boolean functions, it remains to show that our dequantization claim is false without the read-once constraint. First observe that the one-qubit model is a restricted model of quantum formulas over the gate set $G$ consisting of all single-qubit gates and the \CNOT gate. 

We know from \thmref{thm:toffoli-qformula-equals-formula} (case $m=2$) that, in this case, the classical gate set $\hat G$ consists of only the \NOT gate and \PARITY gate.
But no formula and, indeed, no \emph{circuit} of any size over $\hat G$ can compute the \AND of two bits, since \NOT and \PARITY do not form a complete basis. In particular, circuits over $\hat G$ can only compute functions expressible as degree-one polynomials over $\mathbb{F}_2$. Thus we conclude that \thmref{thm:exact-qrof} and \thmref{thm:bdd-error-qrof} are false if the read-once constraint is dropped.

Readers familiar with quantum branching programs~\cite{NHK00,SS05,AGK+05} may notice that the one-qubit model is contained in exact width-two quantum branching programs. However, in our model the only input-dependent gate is a Pauli-$X$, whereas in quantum branching programs, we can apply arbitrary input-controlled unitaries. By conjugating the $X$ gate with arbitrary single qubit gates the one-qubit model can obtain any unitary matrix with the same eigenvalues as $X$, but not an arbitrary unitary matrix. Thus it is not clear that results about width-two quantum branching programs can be ported over to our model.

\section{Conclusions and open problems}
\label{sec:conclusions}

We have shown that read-once quantum formulas over any gate set are only as powerful as read-once classical formulas over a related gate set.  As a concrete example, we showed that read-once quantum formulas over Toffoli and single-qubit gates dequantize to the natural analog, read-once classical formulas over Toffoli and \NOT gates.
Perhaps our results may be extended to constant-depth quantum circuit classes, many of which are defined over Toffoli and single-qubit gates, e.g.,~\cite{GHMP02}.  Our proof technique fails when the formula restriction is lifted, but for constant-depth it may be possible to reuse some of the same ideas.

Another obvious open problem is to dequantize all quantum formulas, not just read-once formulas. Although we show that our classical gate set is insufficient to do so, there might be another classical gate set that works. 

Finally, our dequantization result implies that lower bounds on read-once quantum formulas may be obtained from analogous classical lower bounds.
For related models, including general (i.e., not read-once) quantum formulas and constant-depth quantum circuits, the problem of finding non-trivial lower bounds remains. 

\subparagraph*{Acknowledgements}
The authors would like to thank Andrew Childs, Ben Reichardt and John Watrous for insightful comments.

This work was supported by Canada's NSERC, MITACS, the Ontario Ministry of Research and Innovation, the U.S. ARO, and the Mprime Network.

\bibliographystyle{plain}
\bibliography{refs}

\end{document}